\newtheorem{theorem}{{\sc Theorem}}[section]
\newtheorem{lemma}[theorem]{{\sc Lemma}}
\newtheorem{proposition}[theorem]{{\sc Proposition}}
\theoremstyle{remark}
\newtheorem{remark}[theorem]{{\sc Remark}}
\begin{document}

\title[worst case value at risk]
{A duality approach to the worst case value at risk for a sum of dependent random variables 
with known covariances}
\author{Brice Franke}
\address{Ruhr-Universit\"at Bochum, Fakult\"at f\"ur Mathematik, NA 3/35, D-44780 Bochum, Germany}
\email{brice.franke@ruhr-uni-bochum.de}

\author{Michael Stolz}
\address{Ruhr-Universit\"at Bochum, Fakult\"at f\"ur Mathematik, NA 3/69, D-44780 Bochum, Germany. 
(Current address: Universit\"at Duisburg-Essen, Campus Duisburg, LE 309a, D-47057 Duisburg, Germany)}
\email{michael.stolz@ruhr-uni-bochum.de}

\date{\today}
\thanks{MSC 2010: 91G80, 60E05, 62P05, 90C05\\ Keywords: aggregation of risks, Value at Risk, dependent risks, risk
management, infinite dimensional linear programming}

\begin{abstract} We propose an approach to the aggregation of risks which is based on estimation of simple quantities (such as covariances) associated to a vector of dependent random variables, and which avoids the use of parametric families of copulae. Our main result demonstrates that the method leads to bounds on the worst case Value at Risk for a sum of dependent random variables. Its proof applies duality theory for infinite dimensional linear programs. 
\end{abstract}

\maketitle

\section{Introduction}

Aggregation of risks is a key issue for risk management and regulation in the financial sector. 
Consider an institution which may incur losses $X_1, \ldots, X_n$ in each of its $n$ different
 business divisions. To assess its overall exposure to risk, knowledge of the distribution of the sum
 $S = X_1 + \ldots + X_n$ is vital. Suppose that the distributions of the individual $X_i$ are reasonably
 well-known. This does not imply, however, that there are enough data available to estimate the
 distribution of $S$, let alone the joint distribution of the vector $X = (X_1, \ldots, X_n)$.
 As the process of implementing the Basel II and Solvency II accords has amply demonstrated,
 the data on which the knowledge of the marginal distributions is based are often not suitable for reconstructing 
the joint distribution: They may have been collected at
 different times, asynchronously, or subject to wildly differing standards of exactness or
 documentation (see, e.g., \cite{Romeike-etal}). Nonetheless, it is often the case that at least 
{\it some} data from the joint distribution are
 available --- perhaps not enough to obtain a reasonably accurate picture of the tail behavior
 of $S$, but sufficient to estimate quantities such as the covariances of the $X_i$.\\
Suppose now, for simplicity, that one is ultimately interested in the Value at Risk of $S$
(at some level $\alpha$). If there were no information whatsoever available about the dependence
structure of $X$, one would have to resort to a worst case analysis in the sense that one would let 
the joint distribution of $X$ run through all $n$-variate distributions with the appropriate
marginals. In the situation
at hand, with at least some aspects of the dependence structure known,
it is natural to take only those joint distributions into account that are compatible with the additional information.\\
Arguably, this procedure makes sense even in the case when one has enough data from the joint
distribution to feel comfortable with estimating the distributions of $S$ and even of $X$.
In this situation, the worst case analysis subject to a constraint might be viewed as a robust
version of VaR estimation (robust in the sense of taking model uncertainty into account,
see \cite{FollmerSchiedBuch}) --- even more so as the constraints we are concerned with are stated using 
quantities that are amenable to robust estimation (in the statistician's sense of the word).\\

The present paper treats the ``worst case value at risk''
of a sum of two  random variables whose covariance is assumed to be known. We rephrase the worst
case analysis as an infinite dimensional linear program and establish a handy dual version, from which
it is easy to extract bounds for the worst case value at risk. Making the tools of duality theory 
bear upon the present setup
constitutes the technical core of what follows.\\

Before providing, in Section \ref{technintro}, an exact statement of the main result, let us try
to place our approach within the context of risk management literature and practice. The last decade has seen 
an upsurge of interest in the concept of copulae. Parametric families of copulae have been widely used as a basis
for estimation of various quantities related to aggregate risk (see, e.g., \cite{McNeilFreyEmbrechts} and the references therein), even as these procedures have also met fundamental criticism (see \cite{MikoschExtremes2006}). By contrast, the 
methods which are advertised in the present paper avoid any use of parametric families -- as well as of inversion of marginals -- and are based on estimation of scalar or finite-dimensional quantities, not of real valued functions on a unit cube.

Among the literature that is based on the copula concept, it is a paper of Embrechts, H\"oing, and Juri 
\cite{EmbrechtsHoingJuri} that is closest in spirit to the present approach. The authors  
use order relations for copulae to exploit partial 
information on the dependence structure of $X$ for worst case bounds on $\operatorname{VaR}_{\alpha}(S)$. Specifically,
if $C$ denotes the copula of $X$ and if there are copulae $C_0, C_1$ such that $C_0 \le C$ and $C^d \le C_1^d$ ($C^d$ denoting the dual copula), then the authors find upper and lower bounds for $\operatorname{VaR}_{\alpha}(S)$. The 
copulae $C_0, C_1$
may be chosen to encode properties of $X$ such as positive orthant dependence. In contrast to that, the present 
approach is more empirical in spirit, in that the partial information we use can be easily estimated from the data available. 

\section{Worst case analysis based on marginals and covariance}
\label{technintro}

Suppose that $X$ and $Y$ represent losses associated with different positions, and that a risk manager
wishes to calculate $\operatorname{VaR}_{\alpha}(S)$, the Value at Risk at the level $\alpha$, 
of $S = X + Y$, that is, the infimum 
over all $z \in {\mathbb R}$ such that
$$ {\mathbb P}(S \le z) = \iint 1_{(- \infty, z)}(x+y)\ \mu(dx, dy)\ \ge \alpha,$$
where $\mu$ denotes the joint distribution of the vector $(X, Y)$. Suppose now that
nothing about $\mu$ is known, except for that its marginals are $P$, the distribution of $X$, and $Q$, the distribution of $Y$, respectively. Then it seems natural to replace $\operatorname{VaR}_{\alpha}(S)$ 
with $\operatorname{WVaR}_{\alpha}(P, Q)$, which we define as
$$  \operatorname{WVaR}_{\alpha}(P, Q) := \inf \left\{ z \in {\mathbb R}:\ \inf_{\mu \in {\mathcal M}(P, Q)} \iint  1_{(- \infty, z)}(x+y)\ \mu(dx, dy)\ \ge \alpha\right\},$$ 
with 
$$ {\mathcal M}(P,Q):=\left\{\mu\in{\mathcal M}^1(\mathbb{R}^2):\mu\circ{\rm pr}_1^{-1} = P,\
             \mu\circ{\rm pr}_2^{-1} = Q \right\}. $$ 
Here $ {\rm pr}_1 $ and $ {\rm pr}_2 $  are the projections on the components of 
$ \mathbb{R}\times\mathbb{R} $, and $ {\mathcal M}^1(\mathbb{R}^2) $ denotes the set of probability measures on 
$ \mathbb{R}^2 $. 
Note that $\operatorname{WVaR}_{\alpha}(P, Q) \ge \operatorname{VaR}_{\alpha}(S)$. It is thus natural to regard
$\operatorname{WVaR}_{\alpha}(P, Q)$ as a ``worst case Value at Risk'', taking care of the model uncertainty 
inherent in estimating quantiles of the distribution of a sum of dependent random variables whose joint 
distribution is not (completely) known. \\
Now suppose that {\it some} (though not necessarily exceedingly copious) data from the joint distribution of $(X, Y)$
are available. These can be used to estimate, e.g., the covariance  $$  \iint xy\ \mu(dx,dy)-\int x P(dx)\int y Q(dy), $$   
which typically can be estimated in a robust way on the basis of rather few data -- in stark contrast to the wealth
of data that would be necessary for a reliable estimate of the copula of $(X, Y)$. Incorporating this
extra information into the worst case analysis will lead to a restricted worst case Value at Risk that will in general 
be smaller than $\operatorname{WVaR}_{\alpha}(P, Q)$. To be specific, set
$$ \operatorname{WVaR}_{\alpha}(P, Q, k) := \inf\left\{ z \in {\mathbb R}:\ 
\inf_{\mu \in {\mathcal M}(P, Q, k)} \iint  1_{(- \infty, z)}(x+y)\ \mu(dx, dy)\ \ge \alpha\right\},$$
where $${\mathcal M}(P,Q, k):=\left\{\mu\in{\mathcal M}(P,Q):\iint xy\ \mu(dx,dy)= k \right\}.$$ 
Note that the obvious choice for $k$ is $k = \hat{\rho}_{X,Y}+e_Pe_Q$,
where $\hat{\rho}_{X,Y}$ is an estimate of the covariance of $X$ and $Y$, and
$$  e_P:=\int xP(dx)\quad {\rm and} \quad e_Q:=\int yQ(dy)$$
are assumed to be known.\\

The crucial problem, which will be treated in what follows, is how to evaluate an infimum  
\begin{equation}
\label{transport}
 \inf_{\mu \in {\mathcal M}(P, Q, k)} \iint  \ell(x+y)\ \mu(dx, dy),
\end{equation}
where $\ell(x, y)$ is a lower semicontinuous bounded function such as $1_{(- \infty, z)}(x+y)$. (This 
level of generality should 
provide some leeway for discussions of more general aggregate financial positions (see \cite[p.248]{McNeilFreyEmbrechts})
and risk measures.) In the absence of extra information about the joint distribution, i.e., if the 
infimum in \eqref{transport} is taken over ${\mathcal M}(P, Q)$ rather than ${\mathcal M}(P, Q, k)$, the problem
is but an instance of the classical (Monge-Kantorovich) mass transport problem: A substance 
whose initial spacial distribution is given by $P$, is to be shipped to a final spacial distribution given by $Q$, according to a transportation plan which can be encoded by an element of ${\mathcal M}(P, Q)$ and which 
is supposed to minimize the overall cost if transport from $x$ to $y$ comes at the price $\ell(x, y)$. 
Monographic treatments of mass transport are due to Rachev and R\"uschendorf \cite{RaRu1, RaRu2} and Villani 
\cite{VillaniOldNew}. The special case of $\ell(x, y) = 1_{(- \infty, z)}(x+y)$ has been studied in  
Makarov \cite{Makarov1981-engl}, R\"uschendorf \cite{RuschendorfAdvApplProb1982},
and Frank, Nelsen, and Schweizer \cite{FrankNelsenSchweizer}.\\ 

The modern functional-analytic approach to mass transport, which is due to
Kantorovich, is based on the following duality:
\begin{equation}  
\label{mtdual}
\inf_{\mu\in{\mathcal M}(P,Q)}\iint\ell(x,y)\mu(dx,dy)
          =\sup_{(f,g)\in {\mathcal N}(\ell)}\left(\int f(x)P(dx)+\int g(y)Q(dy)\right),
\end{equation}
where
$$  {\mathcal N}(\ell):=\left\{(f,g)\in C_b(\mathbb{R})\times C_b(\mathbb{R}):f(x)+g(y)\leq \ell(x,y)\ \forall x, y \in {\mathbb R} \right\},$$
$C_b(\mathbb{R})$ denoting the set of bounded continuous functions on ${\mathbb R}$.
Note that by plugging suitable test functions into the right-hand side, one obtains lower bounds on the left-hand side.  
Returning to $\operatorname{WVaR}_{\alpha}(P, Q)$, one observes that replacing for each $z$ and 
$\ell(x, y) = 1_{(- \infty, z)}(x+y)$ the left-hand side of \eqref{mtdual} by a lower bound will lead to an upper bound 
on $\operatorname{WVaR}_{\alpha}(P, Q)$, and this is what matters in risk management. (Of course, a lower bound on 
$\operatorname{WVaR}_{\alpha}(P, Q)$ can be obtained by plugging, for each $z$, test measures into the left-hand side.)\\

Given the usefulness of \eqref{mtdual}, one would like to have an analogous statement for the case
that the infimum is taken over ${\mathcal M}(P, Q, k)$ rather than ${\mathcal M}(P, Q)$. This is the content 
of the main result of this paper, which will be proven in Section \ref{beweis}:

\begin{theorem}
\label{hauptsatz}
Assume that $ \ell $ is bounded and lower semicontinuous and that $P, Q \in {\mathcal M}^1({\mathbb R})$ satisfy
$$  \int x^2 \ P(dx)<\infty,\ \ \int y^2 \ Q(dy)<\infty .$$
Then for all $ k\in\mathbb{R} $ with
\begin{equation}
\label{rangek}
   \inf_{\mu\in{\mathcal M}(P,Q)}\iint xy\ \mu(dx,dy) \leq k\leq   \sup_{\mu\in{\mathcal M}(P,Q)}\iint xy\ \mu(dx,dy),
\end{equation}
one has that
\begin{equation}
\label{mainduality}  
\inf_{\mu\in{\mathcal M}(P,Q,k)}\iint \ell(x,y)\ \mu(dx,dy) = \sup_{(f,g,\alpha)\in {\mathcal U}(\ell)}
   \Bigg(\int f(x)P(dx)+\int g(y)Q(dy)+\alpha k\Bigg) ,
\end{equation}
where 
$$   {\mathcal U}(\ell)
  :=\left\{(f,g,\alpha)\in S(\mathbb{R})\times S(\mathbb{R})\times\mathbb{R}:f(x)+g(y)
+\alpha xy\leq\ell(x,y)\ \forall x, y \in {\mathbb R} \right\} $$
and
$$  S(\mathbb{R}):=\left\{f\in C(\mathbb{R}):f(x)\big/(x^2\vee1)\ \text{\rm is bounded} \right\}    .$$
Moreover, a minimizing measure $ \mu\in{\mathcal M}(P,Q,k) $ for the left-hand side of \eqref{mainduality} exists.
\end{theorem}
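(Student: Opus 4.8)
The plan is to prove the two inequalities in \eqref{mainduality} separately, to reduce the hard one to the classical Kantorovich duality \eqref{mtdual} by treating the covariance constraint through a Lagrange multiplier, and to obtain the minimizing measure from a compactness argument. The inequality ``$\ge$'' is the easy (weak duality) direction: for any $\mu\in\mathcal{M}(P,Q,k)$ and any $(f,g,\alpha)\in\mathcal{U}(\ell)$ I would integrate the pointwise bound $f(x)+g(y)+\alpha xy\le\ell(x,y)$ against $\mu$, using the two marginal constraints and $\iint xy\,\mu(dx,dy)=k$, to get $\int f\,dP+\int g\,dQ+\alpha k\le\iint\ell\,d\mu$; here all integrals are finite since $f,g\in S(\mathbb{R})$ and $P,Q$ have finite second moments. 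Taking the supremum over $\mathcal{U}(\ell)$ and the infimum over $\mathcal{M}(P,Q,k)$ then yields ``$\ge$''.

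For the reverse inequality I would fix $\alpha\in\mathbb{R}$ and consider the tilted cost $c_\alpha(x,y):=\ell(x,y)-\alpha xy$, which is lower semicontinuous and bounded below by $-\|\ell\|_\infty-\tfrac{|\alpha|}{2}(x^2+y^2)$, hence by a sum of a $P$-integrable and a $Q$-integrable function. To this $c_\alpha$ the classical duality \eqref{mtdual} applies, and rewriting its constraint $f(x)+g(y)\le c_\alpha(x,y)$ as $f(x)+g(y)+\alpha xy\le\ell(x,y)$ identifies
$$\varphi(\alpha):=\inf_{\mu\in\mathcal{M}(P,Q)}\iint\bigl(\ell(x,y)-\alpha xy\bigr)\,\mu(dx,dy)$$
with the supremum over the corresponding pairs $(f,g)$. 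Accordingly the right-hand side of \eqref{mainduality} equals $\sup_{\alpha\in\mathbb{R}}\bigl(\varphi(\alpha)+\alpha k\bigr)$. Introducing the value function $V(k):=\inf\{\iint\ell\,d\mu:\mu\in\mathcal{M}(P,Q),\ \iint xy\,d\mu=k\}$, which is convex in $k$ because $\mathcal{M}(P,Q)$ is convex and both integrals are affine in $\mu$, I observe that $\varphi(\alpha)=\inf_k\bigl(V(k)-\alpha k\bigr)=-V^*(\alpha)$, so that $\sup_\alpha\bigl(\varphi(\alpha)+\alpha k\bigr)=V^{**}(k)$. The reverse inequality ``$\le$'' is therefore exactly the statement $V(k)=V^{**}(k)$, i.e.\ that the convex function $V$ agrees with its lower semicontinuous hull at $k$; since $V$ is bounded between $\inf\ell$ and $\sup\ell$, it suffices to show that $V$ is lower semicontinuous on its effective domain, which by \eqref{rangek} is the closed interval with the indicated endpoints.

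The hard part will be this lower semicontinuity, and it is where the second-moment hypotheses are used. Given $k_n\to k$ in the domain and near-optimal $\mu_n\in\mathcal{M}(P,Q)$ with $\iint xy\,d\mu_n=k_n$, the fixed marginals make $\{\mu_n\}$ tight, so along a subsequence $\mu_n$ converges weakly to some $\mu\in\mathcal{M}(P,Q)$. Lower semicontinuity and boundedness of $\ell$ give $\iint\ell\,d\mu\le\liminf_n\iint\ell\,d\mu_n$ by the portmanteau theorem. The essential extra input is that the second moments $\iint x^2\,d\mu_n=\int x^2\,dP$ and $\iint y^2\,d\mu_n=\int y^2\,dQ$ are constant and finite; this makes $|xy|\le\tfrac{1}{2}(x^2+y^2)$ uniformly integrable along $\{\mu_n\}$, so that $\iint xy\,d\mu=\lim_n\iint xy\,d\mu_n=k$. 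Hence $\mu$ is feasible for $V(k)$ and $V(k)\le\iint\ell\,d\mu\le\liminf_n V(k_n)$. Running the same compactness argument on a minimizing sequence for the left-hand side of \eqref{mainduality} at the fixed level $k$ simultaneously produces the minimizing measure whose existence is claimed.

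One technical point I would treat with care is the regularity of the dual potentials: the abstract duality \eqref{mtdual} for $c_\alpha$ a priori furnishes only integrable potentials, whereas $\mathcal{U}(\ell)$ demands $f,g\in S(\mathbb{R})$. I would handle this by first proving the duality for continuous bounded $\ell$, where the relevant $c_\alpha$-conjugates are continuous and of at most quadratic growth, and then passing to lower semicontinuous $\ell$ by approximation from below, exactly as in the standard derivation of \eqref{mtdual}.
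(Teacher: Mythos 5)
Your argument is essentially correct, but it follows a genuinely different route from the paper. The paper never invokes classical Kantorovich duality at all: it recasts the constrained problem directly as an infinite-dimensional linear program over the dual pair $(\mathbb{X},\mathbb{Y})$ with constraint operator $A\mu=(\mu\circ{\rm pr}_1^{-1},\mu\circ{\rm pr}_2^{-1},\iint xy\,d\mu)$, and derives both the absence of a duality gap and the existence of a minimizer from a single closedness criterion of Anderson--Nash (their Theorems 3.9 and 3.22), namely that the set $H=\{(A\mu,\langle\ell,\mu\rangle+r):\mu\in\mathcal{K},\,r\ge0\}$ is $\sigma(\mathbb{V},\mathbb{W})\times\mathbb{R}$-closed. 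You instead dualize only the scalar covariance constraint, reducing \eqref{mainduality} to the identity $V=V^{**}$ for the one-dimensional convex value function $V(k)$, with the two-marginal duality outsourced to \eqref{mtdual}. Both routes ultimately rest on the same analytic core -- tightness from the fixed marginals plus uniform integrability of $xy$ from the fixed second moments (the paper's Lemmata \ref{sigmakonv}--\ref{Raender}) -- so neither is cheaper there; what your route buys is transparency (scalar Fenchel conjugation instead of abstract LP duality), at the price of needing a stronger input than \eqref{mtdual} as stated: $c_\alpha=\ell-\alpha xy$ is unbounded, so you must cite the extension of Kantorovich duality to lower semicontinuous costs bounded below by an integrable sum $-a(x)-b(y)$ (Kellerer, or Villani's Theorem 5.10(i)). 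Note that this version already yields the supremum over $C_b\times C_b$ pairs, and $C_b(\mathbb{R})\subset S(\mathbb{R})$, so your closing worry about the regularity of the dual potentials dissolves: you never need conjugate potentials of quadratic growth, only the inclusion of the $C_b$-feasible set into $\mathcal{U}(\ell)$ together with weak duality to cap the supremum from above. Two small points you should make explicit: (i) the reduction to lower semicontinuity of $V$ at the endpoints requires the standard fact that a proper convex function on $\mathbb{R}$ agrees with its biconjugate at interior points of its domain and at boundary points where it is lsc; and (ii) you must check that $\mathcal{M}(P,Q,k)\neq\emptyset$ for every $k$ satisfying \eqref{rangek} (otherwise the left side of \eqref{mainduality} is $+\infty$) -- this follows from attainment of the extremal values of $\iint xy\,d\mu$ (your own compactness-plus-uniform-integrability argument, or the paper's Lemma \ref{Raender}) together with convex interpolation between the two optimizers, which is exactly how the paper establishes consistency of its primal problem.
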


\vspace{1em}

\begin{remark}
~
\begin{itemize}
\item[(1)] We note that if the supports of the measures $ P $ and $ Q $ are  bounded, then in the definition of $ {\mathcal U}(\ell) $ one can replace $S(\mathbb{R}) \times S(\mathbb{R})$ by $ C_b(\operatorname{supp}(P)) 
\times C_b(\operatorname{supp}(Q)) $. 
\item[(2)] It is not clear whether there exist maximizing vectors $ (f,g,\alpha) $ for the maximization 
problem on the right-hand side of \eqref{mainduality}. In order to obtain such kind of result one 
usually needs apriori bounds on $ f $, $ g $ and $ \alpha $. 
If $ \alpha $ is bounded, apriori bounds can be obtained by replacing $ f $ and $ g $ by suitable 
new functions $ f_0 $ and $ g_0 $ (see, e.g., \cite{JimenezRodriguez}). 
However, in our situation no apriori bound for $ \alpha $ is available.
\item[(3)] To obtain explicit values for the bounds on $k$ in \eqref{rangek}, 
note that they are solutions to classical 
transportation problems with cost function $ c:\mathbb{R}^2\rightarrow\mathbb{R};(x,y)\mapsto xy $. 
The cost function $ c $ is twice continuously differentiable and the mixed partial 
derivatives $ c_{xy} $ are positive.
Therefore there exists an optimal transportation plan for those optimization problems; 
i.e.: there exist kernels $ \Gamma^+(x,dy) $ and $ \Gamma^-(x,dy) $ with the property that the measures
$ \mu^+(dx,dy):=\Gamma^+(x,dy)P(dx) $ and $ \mu^-(dx,dy):=\Gamma^-(x,dy)P(dx) $ satisfy
$$
   \inf_{\mu\in{\mathcal M}(P,Q)}\iint xy\ \mu(dx,dy)=\int xy\ \mu^-(dx,dy) $$ and 
$$          \sup_{\mu\in{\mathcal M}(P,Q)}\iint xy\ \mu(dx,dy)=\int xy\ \mu^+(dx,dy).
$$
In order to provide a more explicit representation for those kernels in terms of the distribution 
functions $ F $ and $ G $ of the measures $ P $ resp. $ Q $, we introduce the generalized 
inverse of a non-decreasing right-continuous function $ H:\mathbb{R}\rightarrow\mathbb{R} $ as 
$$  H^-(\xi):=\inf\{x\in\mathbb{R}^+:F(x)\geq\xi\} ,$$
the left-limit of $ H $ at the point $ x\in\mathbb{R} $ as 
$ H(x-):=\lim_{\epsilon\downarrow0}H(x-\epsilon) $ and the jump of $ H $ at $ x\in\mathbb{R} $ as
$ \Delta H(x):=H(x)-H(x-) $.
Further, we use the notation $ \delta_u(dv) $ to denote the Dirac measure on $ \mathbb{R} $, which 
assigns a unit point mass to a single point $ u\in\mathbb{R} $. Moreover, for a Borel-measurable set
$ A\subset\mathbb{R} $ with positive Lebesgue measure we denote by $ U_A(dy) $ the uniform 
distribution on $ A $. 
We then have 
$$ \Gamma_+(x,dy):=\left\{\begin{array}{cc} 
    \delta_{G^-(F(x))}(dy) &  {\rm if} \ \Delta F(x)=0 \\
    U_{[G^-(F(x-)),G^-(F(x))]}(dy) & {\rm if} \ \Delta F(x)\neq 0 \end{array}\right. $$ 
and 
$$ \Gamma_-(x,dy):=\left\{\begin{array}{cc} 
    \delta_{G^-(1-F(x))}(dy) &  {\rm if} \ \Delta F(x)=0 \\
    U_{[G^-(1-F(x-)),G^-(1-F(x))]}(dy) & {\rm if} \ \Delta F(x)\neq 0 \end{array}\right. .$$
This explicit representation follows from the fact that the supports of the 
optimal measures satisfy strong constraints (see \cite[p.96]{AndersonNashBuch}), which restricts 
the set of possible transportation plans to marginal transformations, which transform the measure $ P $ 
to the measure $ Q $.
\end{itemize}
\end{remark}

In the case that $X$ or $Y$ has a heavy-tailed distribution, Theorem \ref{hauptsatz} is not applicable as stated.
In this situation, one may replace $xy$ by $(xy \vee -R) \wedge R$ for a suitable $R$, a natural choice from the 
point of view of robust statistics. Then the following variant of Theorem \ref{hauptsatz} applies, whose proof is
actually a simplified version of the one given in Section \ref{beweis} for Theorem \ref{hauptsatz} and will therefore
be omitted. For $\kappa \in \operatorname{C}_b({\mathbb R}^2)$ write
 $${\mathcal M}_{\kappa}(P,Q, k):=\left\{\mu\in{\mathcal M}(P,Q):\iint \kappa(x, y)\ \mu(dx,dy)= k \right\}.$$ 

\begin{theorem}
\label{hauptsatzvar}
Assume that $ \ell $ is bounded and lower semicontinuous and $\kappa \in \operatorname{C}_b({\mathbb R}^2)$.
Then for all $ k\in\mathbb{R} $ with
$$
   \inf_{\mu\in{\mathcal M}(P,Q)}\iint \kappa(x, y)\ \mu(dx,dy) \leq k\leq   \sup_{\mu\in{\mathcal M}(P,Q)}
\iint \kappa(x, y)\ \mu(dx,dy),
$$
one has that
\begin{equation}
\label{maindualityvar}  
\inf_{\mu\in{\mathcal M}_{\kappa}(P,Q,k)}\iint \ell(x,y)\ \mu(dx,dy) = \sup_{(f,g,\alpha)\in {\mathcal U}(\ell)}
   \left(\int f(x)P(dx)+\int g(y)Q(dy)+\alpha k\right) ,
\end{equation}
where 
$$   {\mathcal U}_{\kappa}(\ell)
  :=\left\{(f,g,\alpha)\in \operatorname{C}_b({\mathbb R}^2) \times\mathbb{R}:f(x)+g(y)
+\alpha \kappa(x, y) \leq\ell(x,y)\ \forall x, y \in {\mathbb R} \right\}. $$
Moreover, a minimizing measure $ \mu\in{\mathcal M}_{\kappa}(P,Q,k) $ for the left-hand side of \eqref{maindualityvar} exists.
\end{theorem}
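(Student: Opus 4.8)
The plan is to dualize only the single scalar constraint $\iint \kappa\,d\mu = k$, keeping the marginal constraints $\mu \in \mathcal{M}(P,Q)$ intact, and to reduce the inner problem to the classical transport duality \eqref{mtdual}. Weak duality (the inequality ``$\geq$'' in \eqref{maindualityvar}) is immediate: for any feasible $\mu \in \mathcal{M}_\kappa(P,Q,k)$ and any $(f,g,\alpha) \in \mathcal{U}_\kappa(\ell)$, integrating the pointwise inequality $f(x)+g(y)+\alpha\kappa(x,y) \leq \ell(x,y)$ against $\mu$ yields $\int f\,dP + \int g\,dQ + \alpha k \leq \iint \ell\,d\mu$. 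The substance of the theorem is the reverse inequality together with existence of a minimizer, and this is exactly where the boundedness of $\kappa$ turns the argument into a genuine simplification of the proof of Theorem \ref{hauptsatz}.

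First I would fix $\alpha \in \mathbb{R}$ and observe that the constraint defining $\mathcal{U}_\kappa(\ell)$ reads $(f,g) \in \mathcal{N}(\ell_\alpha)$ for the modified cost $\ell_\alpha := \ell - \alpha\kappa$. Since $\kappa \in C_b(\mathbb{R}^2)$, the function $\ell_\alpha$ is again bounded and lower semicontinuous, so the Monge--Kantorovich duality \eqref{mtdual} applies verbatim and gives $\sup_{(f,g)\in\mathcal{N}(\ell_\alpha)}\big(\int f\,dP + \int g\,dQ\big) = \inf_{\mu\in\mathcal{M}(P,Q)}\iint(\ell-\alpha\kappa)\,d\mu$. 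Taking the supremum over $\alpha$ rewrites the right-hand side of \eqref{maindualityvar} as
\[
\sup_{\alpha\in\mathbb{R}}\inf_{\mu\in\mathcal{M}(P,Q)}\Big(\iint \ell\,d\mu + \alpha\big(k - \iint\kappa\,d\mu\big)\Big).
\]
Writing $V(t) := \inf\{\iint\ell\,d\mu : \mu\in\mathcal{M}(P,Q),\ \iint\kappa\,d\mu = t\}$ for the value function, so that the left-hand side of \eqref{maindualityvar} is $V(k)$, and grouping the inner infimum according to the value $t=\iint\kappa\,d\mu$ turns it into $\inf_t\big(V(t)+\alpha(k-t)\big)$, whence the displayed expression is the biconjugate $V^{**}(k)$. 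The entire theorem thus reduces to the single identity $V(k)=V^{**}(k)$, i.e. to showing that $V$ is convex and lower semicontinuous at $k$.

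Convexity of $V$ is immediate: $\mathcal{M}(P,Q)$ is convex, a convex combination of couplings with marginals $P,Q$ again has those marginals, and both $\mu\mapsto\iint\ell\,d\mu$ and $\mu\mapsto\iint\kappa\,d\mu$ are affine, so interpolating near-optimal couplings for constraint levels $t_0,t_1$ gives $V((1-\lambda)t_0+\lambda t_1)\leq(1-\lambda)V(t_0)+\lambda V(t_1)$. For lower semicontinuity and for existence of a minimizer I would exploit weak compactness. Because $P$ and $Q$ are fixed probability measures, the family $\mathcal{M}(P,Q)$ is tight, hence relatively weakly compact by Prokhorov; being weakly closed, it is weakly compact. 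Since $\kappa\in C_b(\mathbb{R}^2)$, the map $\mu\mapsto\iint\kappa\,d\mu$ is weakly continuous, so its image over the convex (hence connected) set $\mathcal{M}(P,Q)$ is exactly the closed interval in the hypothesis of the theorem, both endpoints attained; in particular $\mathcal{M}_\kappa(P,Q,k)$ is nonempty and, as a weakly closed subset of a weakly compact set, itself weakly compact. As $\ell$ is bounded and lower semicontinuous, $\mu\mapsto\iint\ell\,d\mu$ is weakly lower semicontinuous, so it attains its infimum on $\mathcal{M}_\kappa(P,Q,k)$: this is the asserted minimizing measure.

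The same weak-compactness package delivers lower semicontinuity of $V$ on the whole closed interval, endpoints included: given $t_n\to t$, pick minimizers $\mu_n\in\mathcal{M}_\kappa(P,Q,t_n)$, extract a weakly convergent subsequence $\mu_n\rightharpoonup\mu$, use weak continuity of $\iint\kappa\,d\mu$ to see $\mu\in\mathcal{M}_\kappa(P,Q,t)$, and use weak lower semicontinuity of $\iint\ell\,d\mu$ to get $V(t)\leq\iint\ell\,d\mu\leq\liminf_n V(t_n)$. A proper convex lower semicontinuous function on $\mathbb{R}$ coincides with its biconjugate, so $V(k)=V^{**}(k)$, which is precisely \eqref{maindualityvar}. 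The only point needing care, and the analogue of the main obstacle in Theorem \ref{hauptsatz}, is this lower semicontinuity of $V$ at $k$; here it is handed to us for free by weak compactness, whereas the unbounded cost $\kappa(x,y)=xy$ destroys weak continuity of $\mu\mapsto\iint xy\,d\mu$ and forces the enlarged test space $S(\mathbb{R})$, a truncation argument, and second-moment bookkeeping. Dispensing with all of that is exactly why the present proof is a simplified version of the one in Section \ref{beweis}.
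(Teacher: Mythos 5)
Your proof is correct, but it follows a genuinely different route from the paper's. The paper proves Theorem \ref{hauptsatzvar} by the same infinite-dimensional linear-programming scheme used for Theorem \ref{hauptsatz}, only simplified: all three constraints (both marginals and the moment condition) are dualized simultaneously through a constraint operator $A$, and both the absence of a duality gap and the existence of a minimizer follow from the Anderson--Nash theorems once the set $H=\{(A\mu,\langle\ell,\mu\rangle+r):\mu\in\mathcal{K},\ r\ge 0\}$ is shown to be closed; boundedness of $\kappa$ only simplifies the functional-analytic setup (plain $C_b$-spaces and finite measures, tightness directly from the marginals, no second-moment weights). You instead dualize only the scalar constraint: for each fixed $\alpha$ you absorb $\alpha\kappa$ into the cost and invoke the classical Kantorovich duality \eqref{mtdual} for the bounded lower semicontinuous cost $\ell-\alpha\kappa$, then recognize the remaining supremum over $\alpha$ as the Fenchel biconjugate $V^{**}(k)$ of the one-dimensional value function $V$, and finish by checking that $V$ is proper, convex, and lower semicontinuous (Fenchel--Moreau), with lower semicontinuity, nonemptiness of $\mathcal{M}_\kappa(P,Q,k)$, and the minimizing measure all supplied by Prokhorov compactness of $\mathcal{M}(P,Q)$, weak continuity of $\mu\mapsto\iint\kappa\,d\mu$, and weak lower semicontinuity of $\mu\mapsto\iint\ell\,d\mu$. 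Your route is more elementary --- no dual pairs of weighted spaces, no LP machinery --- and it pinpoints exactly where boundedness of $\kappa$ enters (weak continuity of the constraint functional); the price is that it uses \eqref{mtdual} as a black box (the paper's method reproves such dualities rather than assuming them) and that it cannot be adapted to Theorem \ref{hauptsatz}, since for $\kappa(x,y)=xy$ the constraint functional is not weakly continuous, whereas the paper's LP argument handles both theorems uniformly. If you write this up, do state explicitly that $V$ is proper (finite at $k$ by the nonemptiness you proved, never $-\infty$ since $\ell$ is bounded), and that the version of \eqref{mtdual} you invoke is the one valid for bounded lower semicontinuous, not merely continuous, costs.
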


\section{Proof of Theorem \ref{hauptsatz}}
\label{beweis}

In this section we will apply the theory of linear programming in infinite dimensional spaces
(see, e.g., Anderson and Nash \cite{AndersonNashBuch}) and will thus have to rephrase our problem 
in the framework of topological vector spaces. Denote by $ {\mathcal R}_b(\mathbb{R}^2) $
the space of signed Radon measures with finite total variation norm on $ \mathbb{R}^2 $. 
$ {\mathcal R}_b(\mathbb{R}^2) $ contains the cone  
$ {\mathcal R}_+(\mathbb{R}^2) $ of finite nonnegative Radon measures. 
By the Hahn decomposition theorem, every signed measure 
$ \mu\in{\mathcal R}_b(\mathbb{R}^2) $ can be decomposed into $ \mu=\mu^+-\mu^- $, where $ \mu^+ $ and $ \mu^- $ are elements of
$ {\mathcal R}_+(\mathbb{R}^2) $, and  one can define the total variation measure $ |\mu|:=\mu^++\mu^- $ 
(see, e.g., \cite{BauerMI}). 
Consider the vector spaces 
$$  \mathbb{X}:=\Bigg\{\mu\in {\mathcal R}_b(\mathbb{R}^2):\iint x^2+y^2 \ d|\mu|<\infty 
                  \Bigg\}  $$
and
$$   \mathbb{Y}:=  \Big\{h\in C(\mathbb{R}^2):h(x,y)\big/((x^2+y^2)\vee1)\in C_b(\mathbb{R}^2)\Big\} .$$ 
Note that $\mathbb{Y}$ contains the function $(x, y) \mapsto xy.$ The pairing
$$  \big\langle.,.\big\rangle:\mathbb{X}\times\mathbb{Y}
   \rightarrow\mathbb{R};\ (\mu, h)\mapsto\big\langle \mu, h \big\rangle:=\iint h(x,y) \mu(dx,dy)   $$
puts $\mathbb{X}$ and $\mathbb{Y}$ into duality. In $ \mathbb{X} $ one has the 
cone $ {\mathcal K}:=\mathbb{X}\cap{\mathcal R}_+(\mathbb{R}^2) $. 
We endow $ \mathbb{X} $ with the $ \sigma\big(\mathbb{X},\mathbb{Y}\big) $-topology,
which is the coarsest topology such that for all $  h\in \mathbb{Y} $ the functionals 
$ \mu\mapsto\langle \mu, h\rangle $ are continuous. Observe that $C_b({\mathbb R}^2) \subset {\mathbb Y}$,
so $\sigma\big(\mathbb{X},\mathbb{Y}\big)$-convergence implies weak convergence in the usual measure-theoretic
sense. (In what follows, ``weak concergence'' (written as $\stackrel{w}{\longrightarrow})$ will always
be understood in this sense. Otherwise, the relevant $\sigma$-topology will be made explicit.) We collect this and 
related useful properties in the following
\begin{lemma}
 \label{sigmakonv}
~ We write $\nu_k := ((x^2 + y^2) \vee 1)\mu_n,\  \nu := ((x^2 + y^2) \vee 1)\mu$.
\begin{itemize}
\item[(i)]$$\mu_n \stackrel{\sigma(\mathbb{X}, \mathbb{Y})}{\longrightarrow} \mu  \quad \Longleftrightarrow\quad
\nu_n \stackrel{w}{\longrightarrow} \nu.$$
\item[(ii)]  $$\mu_n \stackrel{\sigma(\mathbb{X}, \mathbb{Y})}{\longrightarrow} \mu  \quad \Longrightarrow\quad
\mu_n \stackrel{w}{\longrightarrow} \mu.$$
\item[(iii)] $$\mu_n \stackrel{\sigma(\mathbb{X}, \mathbb{Y})}{\longrightarrow} \mu  \quad \Longrightarrow\quad
(\nu_n)\ \text{{\rm  is tight}}.$$
\item[(iv)] $$(\nu_n)\ \text{{\rm  is tight}}\quad \Longrightarrow\quad (\mu_n)\ 
\text{{\rm  is tight}}.$$
\end{itemize}
\end{lemma}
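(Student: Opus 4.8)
The plan is to reduce all four assertions to classical weak convergence of finite measures by factoring out the weight $w(x,y) := (x^2+y^2)\vee 1$. First I would record the two elementary bijections that drive everything. Since $\iint w\, d|\mu| < \infty$ for $\mu \in \mathbb{X}$, the map $\mu \mapsto \nu := w\mu$ sends $\mathbb{X}$ bijectively onto the space $\mathcal{R}_b(\mathbb{R}^2)$ of finite signed Radon measures, with inverse $\nu \mapsto w^{-1}\nu$ (well-defined because $w \geq 1$); dually, by the very definition of $\mathbb{Y}$, the map $h \mapsto \phi := h/w$ sends $\mathbb{Y}$ bijectively onto $C_b(\mathbb{R}^2)$. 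The identity tying them together is
$$ \langle \mu, h\rangle = \iint h\, d\mu = \iint \frac{h}{w}\, d(w\mu) = \iint \phi\, d\nu, $$
valid for every $h = w\phi \in \mathbb{Y}$.

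With this in hand, part (i) is immediate: $\sigma(\mathbb{X},\mathbb{Y})$-convergence of $\mu_n$ means $\langle \mu_n, h\rangle \to \langle \mu, h\rangle$ for all $h \in \mathbb{Y}$, which, as $h$ runs through $\mathbb{Y}$ and $\phi = h/w$ runs through $C_b(\mathbb{R}^2)$, is exactly $\iint \phi\, d\nu_n \to \iint \phi\, d\nu$ for all $\phi \in C_b(\mathbb{R}^2)$, i.e. $\nu_n \stackrel{w}{\longrightarrow} \nu$. For part (ii) I would use the inclusion $C_b(\mathbb{R}^2) \subset \mathbb{Y}$ already noted in the text (if $\phi \in C_b(\mathbb{R}^2)$ then $\phi/w$ is again bounded and continuous because $w \geq 1$): testing $\sigma(\mathbb{X},\mathbb{Y})$-convergence against the functions $\phi \in C_b(\mathbb{R}^2) \subset \mathbb{Y}$ yields $\iint \phi\, d\mu_n \to \iint \phi\, d\mu$, which is weak convergence of $\mu_n$.

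For part (iii) I would invoke (i) to obtain $\nu_n \stackrel{w}{\longrightarrow} \nu$, the $\nu_n$ being nonnegative finite measures whose total masses $\nu_n(\mathbb{R}^2) = \langle \mu_n, w\rangle \to \langle \mu, w\rangle = \nu(\mathbb{R}^2)$ converge (testing against $w \in \mathbb{Y}$). Tightness then follows from the standard fact that a weakly convergent sequence of finite measures on a Polish space is tight; concretely, given $\varepsilon > 0$ one picks a continuous cutoff $\chi$ equal to $1$ on a compact set carrying all but $\varepsilon$ of $\nu$, bounds $\nu_n(K^c) \leq \iint(1-\chi)\, d\nu_n \to \iint(1-\chi)\, d\nu < \varepsilon$ for all large $n$, and absorbs the finitely many remaining indices by inner regularity of each $\nu_n$. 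Part (iv) is the easiest: since $w \geq 1$ one has $d\mu_n = w^{-1}\, d\nu_n \leq d\nu_n$, so $\mu_n(K^c) \leq \nu_n(K^c)$, and any compact set witnessing tightness of $(\nu_n)$ witnesses tightness of $(\mu_n)$.

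The real content is concentrated in the opening observation; once $w$ is factored out, (i) and (ii) are bookkeeping and (iv) is a one-line monotonicity estimate. The only step that is not purely formal is (iii), where the passage from weak convergence to tightness must be handled with care for the finitely many small indices, but this is classical and poses no genuine obstacle. The single point I would flag explicitly is that the assertions involving tightness are meant for nonnegative $\mu_n$ (as in the application, where the $\mu_n$ are probability measures), since the cutoff estimate in (iii) uses the nonnegativity of $\nu_n$.
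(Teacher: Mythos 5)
Your proof is correct, and there is nothing in the paper to compare it against: the authors state Lemma \ref{sigmakonv} without proof, so your argument fills in exactly what they leave implicit. The reduction via the bijections $\mu \mapsto w\mu$ and $h \mapsto h/w$ (with $w = (x^2+y^2)\vee 1$) is the natural mechanism behind all four parts, and your explicit caveat that the tightness statements (iii) and (iv) presuppose nonnegative $\mu_n$ --- which is how the lemma is used, since the relevant measures lie in the cone $\mathcal{K}$ --- is a point the paper itself glosses over.
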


We define the vector spaces 
$$   \mathbb{V}:=\Bigg\{\nu\in {\mathcal R}_b(\mathbb{R}):\int x^2 \ d|\nu|<\infty\Bigg\}
   \times 
 \Bigg\{\nu\in {\mathcal R}_b(\mathbb{R}):\int y^2 \ d|\nu|<\infty\Bigg\}\times\mathbb{R} $$
and 
$$    \mathbb{W}:= \Big\{f\in C(\mathbb{R}):f(x)\big/(x^2\vee1)\in C_b(\mathbb{R})\Big\}\times
        \Big\{g\in C(\mathbb{R}):g(y)\big/(y^2\vee1)\in C_b(\mathbb{R})\Big\}\times\mathbb{R} .$$
The vector spaces $ \mathbb{V} $ and $ \mathbb{W} $ form a dual pair with pairing
$$  \big\langle.,.\big\rangle:\mathbb{V}\times\mathbb{W}\rightarrow\mathbb{R};
     \Big((\mu_1,\mu_2,\alpha), (f,g,\beta) \Big)\mapsto \int f(x)\mu_1(dx)+\int g(y)\mu_2(dy)+\alpha\beta .$$
Define the constraint operator
$$ A:\mathbb{X}\rightarrow\mathbb{V}; \ 
    \mu\mapsto \Bigg(\mu\circ{\rm pr}_1^{-1},\mu\circ{\rm pr}_2^{-1},\iint xy\ \mu(dx,dy)\Bigg).$$

The problem to evaluate the left-hand side of \eqref{mainduality} now can be restated in linear programming 
language as follows:
\begin{eqnarray} \label{primal}
   {\rm minimize}: && \iint \ell(x,y)\mu(dx,dy) \\
 \nonumber  {\rm subject \ to}: && A\mu=(P,Q,k) \\
   \nonumber                   && \mu\in{\mathcal K}.
\end{eqnarray}
We will refer to \eqref{primal} as the primal problem. 
To see its equivalence to the original problem, note that square integrability of the marginals implies 
$\mu \in {\mathcal K}$, and that finite Borel measures on $\mathbb{R}^2$ are Radon measures (see, e.g., 
\cite{BauerMI}). Measures $ \mu\in{\mathcal K} $ which satisfy the constraints in the primal problem  
are called feasible. Note that such $\mu$ must be probability measures. 
If feasible solutions to the primal problem exist, it is called consistent. The minimal value of the integral in \eqref{primal} is called the value of the primal problem in optimization theory. In what follows, we will use the more suggestive 
terminology of ``optimal value''.\\

In order to state the corresponding dual problem, we have to compute the adjoint operator $ A^\ast: 
{\mathbb W} \to {\mathbb Y} $ of $ A $.
We have 
\begin{eqnarray*}
 \left\langle A\mu, (f,g,\alpha) \right\rangle&=& \left\langle 
          \left(\mu\circ{\rm pr}_1^{-1},\mu\circ{\rm pr}_2^{-1},\iint xy\ \mu(dx,dy)\right), \left(f,g,\alpha\right)\right\rangle\\
 &=&\int f(x)(\mu\circ{\rm pr}_1^{-1})(dx)+\int g(y)(\mu\circ{\rm pr}_2^{-1})(dy)+\alpha\iint xy\ \mu(dx,dy) \\
 &=&\iint\Big(f(x)+g(y)+\alpha xy\Big)\mu(dx,dy) .
\end{eqnarray*}
Thus we see that 
$$  A^\ast:\mathbb{W}\rightarrow \mathbb{Y}; \
     (f,g,\alpha)\mapsto \Big((x,y)\mapsto f(x)+g(y)+\alpha xy\Big) .$$
The dual problem is now given by 
\begin{eqnarray} \label{dual}
   {\rm maximize}: && \int f(x)P(dx)+\int g(y)Q(dy)+\alpha k \\
 \nonumber  {\rm subject \ to}: && A^\ast(f,g,\alpha)\leq \ell \\
 \nonumber && (f,g,\alpha)\in\mathbb{W}.
\end{eqnarray}
Triplets $ (f,g,\alpha) $ which satisfy the constraints of the dual problem (\ref{dual}) are called feasible.
The dual problem is called consistent, if feasible triplets exist.\\[1em]
 
The central concern of linear programming in infinite dimensional spaces is to give sufficient conditions 
for the optimal values of the primal problem (\ref{primal}) and the dual problem (\ref{dual}) to coincide.
The weak duality theorem (see \cite[Thm.3.1]{AndersonNashBuch}) states:

\begin{proposition}
If the primal problem (\ref{primal}) and the dual problem (\ref{dual}) are consistent,
then the optimal value of the primal problem is larger than the optimal value of the dual problem.
\end{proposition}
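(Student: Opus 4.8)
The plan is to derive the inequality directly from the order structure, since weak duality is the elementary half of the duality theory and requires no topological input beyond the definitions already in place. First I would fix an arbitrary feasible $\mu$ for the primal problem \eqref{primal} and an arbitrary feasible triple $(f,g,\alpha)$ for the dual problem \eqref{dual}. Feasibility of the latter means precisely that $A^\ast(f,g,\alpha)\le\ell$, i.e. the pointwise inequality
\[ f(x)+g(y)+\alpha xy \;\le\; \ell(x,y) \qquad \text{for all } x,y\in\mathbb{R} \]
holds, while feasibility of $\mu$ provides $\mu\in\mathcal{K}$ (in particular $\mu\ge 0$) together with the constraint $A\mu=(P,Q,k)$.

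The key step is then to integrate this pointwise inequality against the nonnegative measure $\mu$. Because $\mu\in\mathbb{X}$ and hence $\iint (x^2+y^2)\,d|\mu|<\infty$, while $f,g\in\mathbb{W}$ grow at most quadratically and $|xy|\le\tfrac12(x^2+y^2)$, each of the three integrals $\int f\,d(\mu\circ\mathrm{pr}_1^{-1})$, $\int g\,d(\mu\circ\mathrm{pr}_2^{-1})$, and $\alpha\iint xy\,d\mu$ is absolutely convergent; this is exactly what the function spaces $\mathbb{X}$, $\mathbb{Y}$, $\mathbb{W}$ were designed to guarantee, so the integration is legitimate and the integrand splits by linearity. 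Since $\mu\ge 0$, integrating preserves the inequality, yielding
\[ \iint \big(f(x)+g(y)+\alpha xy\big)\,\mu(dx,dy) \;\le\; \iint \ell(x,y)\,\mu(dx,dy). \]

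Next I would recognize the left-hand side via the adjoint relation computed just above: it equals $\langle A\mu,(f,g,\alpha)\rangle$, and since $A\mu=(P,Q,k)$ by primal feasibility, this in turn equals $\int f(x)P(dx)+\int g(y)Q(dy)+\alpha k$, which is precisely the dual objective. We thus obtain, for every feasible $\mu$ and every feasible $(f,g,\alpha)$,
\[ \int f\,dP+\int g\,dQ+\alpha k \;\le\; \iint \ell\,d\mu. \]
Taking the supremum over all dual-feasible triples on the left and the infimum over all primal-feasible measures on the right shows that the optimal value of the dual is at most the optimal value of the primal, which is the assertion.

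The consistency hypotheses enter only to ensure that both the supremum and the infimum are taken over nonempty sets, so that the optimal values are well-defined quantities rather than determined by the $\pm\infty$ conventions for optima over empty sets. Accordingly I do not expect any genuine obstacle here: no compactness, closedness, or separation argument is needed, and the sole point requiring care is the absolute convergence and splitting of the integrals, which has already been arranged by the choice of the dual pairs $(\mathbb{X},\mathbb{Y})$ and $(\mathbb{V},\mathbb{W})$. The substantive difficulty of the paper lies entirely in the reverse inequality (strong duality and attainment), not in this proposition.
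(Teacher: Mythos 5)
Your argument is correct and is the standard proof of weak duality: integrate the pointwise dual-feasibility inequality against a nonnegative primal-feasible $\mu$, use the adjoint identity and the constraint $A\mu=(P,Q,k)$, and pass to the supremum and infimum; the integrability checks via the growth conditions defining $\mathbb{X}$ and $\mathbb{W}$ are exactly the right points to verify. The paper itself does not spell this out but simply cites Theorem~3.1 of Anderson and Nash, whose underlying argument is the same as yours, so your proof is a correct, self-contained version of what the paper delegates to a reference.
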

The difference between both values is called the duality gap.
So we have to prove that there is no duality gap in the present situation.
The following proposition collects some essential facts from infinite dimensional linear 
programming.
\begin{proposition} \label{kp-stetig}
Assume that the set 
$$  H:=\Big\{(A\mu,\langle\ell,\mu\rangle+r):\ \mu\in{\mathcal K},\ r\geq0\Big\}   $$
is closed in $ \mathbb{V}\times\mathbb{R} $. Then one has: 
\begin{itemize}
\item[(a)] If the primal problem (\ref{primal}) is consistent with finite optimal value, then 
there is no duality gap.
\item[(b)] If the primal problem (\ref{primal}) has finite optimal value, then it is solvable, i.e., 
a minimizing $\mu \in \mathcal{K}$ exists.
\end{itemize}
\end{proposition}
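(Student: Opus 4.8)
The plan is to obtain both assertions from one Hahn--Banach separation argument in the dual pair $(\mathbb{V},\mathbb{W})$, mirroring the finite-dimensional LP duality theory. Write $b:=(P,Q,k)\in\mathbb{V}$ and let $v$ denote the (finite) optimal value of the primal problem \eqref{primal}. First I would record that $H$ is convex: $\mathcal{K}$ is a convex cone, both $A$ and $\mu\mapsto\langle\ell,\mu\rangle$ are linear, and the slack $r$ ranges over $[0,\infty)$. Together with the standing closedness hypothesis, $H$ is thus a closed convex subset of $\mathbb{V}\times\mathbb{R}$, a locally convex space (the product of the $\sigma(\mathbb{V},\mathbb{W})$-topology with the usual topology on $\mathbb{R}$) whose topological dual is $\mathbb{W}\times\mathbb{R}$.

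For part (b) I would argue directly from closedness. Choosing a minimizing sequence $\mu_n\in\mathcal{K}$ with $A\mu_n=b$ and $\langle\ell,\mu_n\rangle\to v$, the points $(b,\langle\ell,\mu_n\rangle)$ lie in $H$ (take $r=0$) and converge to $(b,v)$, since their first coordinate is constantly $b$. Closedness of $H$ yields $(b,v)\in H$, so there are $\mu\in\mathcal{K}$ and $r\ge0$ with $A\mu=b$ and $\langle\ell,\mu\rangle+r=v$. As $\mu$ is feasible, $\langle\ell,\mu\rangle\ge v$, forcing $r=0$ and $\langle\ell,\mu\rangle=v$; thus $\mu$ is a minimizer.

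For part (a) I would separate, for a fixed $\epsilon>0$, the point $(b,v-\epsilon)$ from $H$. It lies outside $H$ because any element of $H$ whose first coordinate equals $b$ has second coordinate $\langle\ell,\mu\rangle+r\ge v>v-\epsilon$. Strict separation in the locally convex space $\mathbb{V}\times\mathbb{R}$ yields a functional $((f,g,\beta),s)\in\mathbb{W}\times\mathbb{R}$ and a constant $c$ with
\[
\iint\big(f(x)+g(y)+\beta xy+s\,\ell(x,y)\big)\,\mu(dx,dy)+s\,r\;\ge\;c\;>\;\int f\,dP+\int g\,dQ+\beta k+s(v-\epsilon)
\]
for all $\mu\in\mathcal{K}$ and $r\ge0$. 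Letting $r\to\infty$ forces $s\ge0$; replacing $\mu$ by $\lambda\mu$ and letting $\lambda\to\infty$ forces $\iint(f+g+\beta xy+s\ell)\,d\mu\ge0$ for every $\mu\in\mathcal{K}$, and testing this against Dirac measures $\delta_{(x,y)}\in\mathcal{K}$ gives the pointwise inequality $f(x)+g(y)+\beta xy+s\ell(x,y)\ge0$. Taking $\mu=0$, $r=0$ shows $c\le0$, whence the strict inequality yields $\int f\,dP+\int g\,dQ+\beta k<-s(v-\epsilon)$. Consistency now rules out $s=0$: for a feasible $\mu_0$ one would otherwise have $\int f\,dP+\int g\,dQ+\beta k=\iint(f+g+\beta xy)\,d\mu_0\ge0$, contradicting this last inequality. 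Hence $s>0$, and by the pointwise inequality $(-f/s,-g/s,-\beta/s)\in\mathbb{W}$ is dual-feasible. Multiplying $\int f\,dP+\int g\,dQ+\beta k<-s(v-\epsilon)$ by $-1/s$ (which reverses the inequality) shows that its dual objective $-\tfrac1s\big(\int f\,dP+\int g\,dQ+\beta k\big)$ exceeds $v-\epsilon$; since the weak duality theorem bounds the dual value above by $v$, letting $\epsilon\downarrow0$ shows the dual value equals $v$, so there is no duality gap.

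The main obstacle is ensuring the separating hyperplane is non-vertical, i.e.\ that $s>0$; this is exactly the step that uses consistency of the primal and that would fail for an inconsistent problem. A secondary technical point is the identification of the topological dual of $\mathbb{V}\times\mathbb{R}$ with $\mathbb{W}\times\mathbb{R}$, which guarantees both that the separating functional has the form $((f,g,\beta),s)$ and that the rescaled triple $(-f/s,-g/s,-\beta/s)$ lands in the function space $\mathbb{W}$ featuring in the dual problem \eqref{dual}. Finally, I would emphasise that the genuinely substantial hypothesis is the closedness of $H$, which is assumed here but whose verification for the concrete operator $A$ and cone $\mathcal{K}$ is the analytic core of the whole argument, and is where the second-moment assumptions on $P,Q$ together with Lemma \ref{sigmakonv} are expected to enter.
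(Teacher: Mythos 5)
Your proof is correct, but it is genuinely more than what the paper does: the paper's entire proof of this proposition is a citation of Theorems 3.9 and 3.22 of Anderson and Nash \cite{AndersonNashBuch}, together with the observation that those proofs never use continuity of the objective functional $\mu\mapsto\langle\ell,\mu\rangle$. What you have written out is, in effect, the content of those cited theorems: part (b) is exactly the ``closedness of $H$ implies attainment'' argument (your minimizing-sequence step, with $r=0$ forced by optimality, is the standard one), and part (a) is the Hahn--Banach strict separation of the point $(b,v-\epsilon)$ from the closed convex set $H$, with consistency used precisely to exclude a vertical separating hyperplane ($s=0$). Your argument also silently confirms the paper's parenthetical claim, since nowhere do you need $\mu\mapsto\langle\ell,\mu\rangle$ to be $\sigma(\mathbb{X},\mathbb{Y})$-continuous --- only its linearity and the assumed closedness of $H$ enter. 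Two small points deserve explicit mention if this were to be written up in full: the identification of the continuous dual of $(\mathbb{V},\sigma(\mathbb{V},\mathbb{W}))\times\mathbb{R}$ with $\mathbb{W}\times\mathbb{R}$ requires that the pairing $\langle\cdot,\cdot\rangle$ on $\mathbb{V}\times\mathbb{W}$ separates points (it does, since $\mathbb{W}$ contains enough test functions), and the Dirac measures $\delta_{(x,y)}$ you test against do lie in $\mathcal{K}$ because they trivially have finite second moments. With those remarks your separation argument is a complete, self-contained proof of the proposition, whereas the paper delegates it entirely to the literature; the trade-off is length versus making the duality mechanism visible.
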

\begin{proof} These are Theorems 3.9 and 3.22 in Anderson and Nash \cite{AndersonNashBuch}, plus
the observation that the proofs of these theorems as well as the proof of 
\cite[Thm.3.3]{AndersonNashBuch} do not use 
the continuity of the functional $ \mu\mapsto\langle\mu,\ell\rangle $. 
\end{proof}
Lemmata \ref{Portmanteau} and \ref{Straffheit} will help us prove that in the present set-up the set $ H $ is closed.

\begin{lemma} \label{Portmanteau} 
For a bounded lower semicontinuous function $ h:\mathbb{R}^2\rightarrow\mathbb{R} $, weak 
convergence  in ${\mathcal R}_+(\mathbb{R}^2) $ of $ \mu_n$ towards $ \mu_\ast$ 
implies
$$  \liminf_{n\rightarrow\infty}\iint h(x,y)\mu_n(dx,dy)\geq \iint h(x,y)\mu_\ast(dx,dy)    .$$
\end{lemma}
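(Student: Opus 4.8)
The plan is to run a Portmanteau-type argument, the only twist being that the $\mu_n$ are finite measures rather than probability measures, so the total masses have to be tracked. Since $h$ is bounded, I would first reduce to the case $h \geq 0$: setting $h' := h + c$ with $c$ large enough that $h' \geq 0$, and noting that the constant function $1$ lies in $C_b(\mathbb{R}^2)$, weak convergence gives $\mu_n(\mathbb{R}^2) \to \mu_\ast(\mathbb{R}^2)$, so the asserted inequality for $h'$ is equivalent to the one for $h$. Thus I may assume $0 \leq h \leq b$ for some constant $b$.

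The heart of the argument is to approximate $h$ from below by bounded continuous functions. For each $m \in \mathbb{N}$ I would set
$$ h_m(x,y) := \inf_{(u,v) \in \mathbb{R}^2} \Big( h(u,v) + m\, \| (x,y) - (u,v) \| \Big). $$
Because $h$ is bounded below, each $h_m$ is well-defined and satisfies $0 \leq h_m \leq h \leq b$ (the upper bound by taking $(u,v) = (x,y)$); a standard estimate shows each $h_m$ is $m$-Lipschitz, hence $h_m \in C_b(\mathbb{R}^2)$; and the sequence $(h_m)_m$ is non-decreasing. The one place where lower semicontinuity of $h$ enters is the claim that $h_m(x,y) \uparrow h(x,y)$ pointwise as $m \to \infty$, and I expect verifying this to be the main (though still routine) obstacle. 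The argument is as follows: fixing $(x,y)$, choose $(u_m,v_m)$ nearly attaining the infimum defining $h_m$; boundedness of $h$ forces $m\,\|(x,y)-(u_m,v_m)\|$ to stay bounded, so $(u_m,v_m) \to (x,y)$, and lower semicontinuity then gives $\liminf_m h(u_m,v_m) \geq h(x,y)$, whence $\lim_m h_m(x,y) \geq h(x,y)$; the reverse inequality is immediate from $h_m \leq h$.

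With the approximants in hand, the conclusion follows by taking two limits in turn. Fixing $m$, since $h_m \in C_b(\mathbb{R}^2)$ weak convergence yields $\iint h_m\, d\mu_n \to \iint h_m\, d\mu_\ast$, and from $h_m \leq h$ we obtain
$$ \liminf_{n \to \infty} \iint h \, d\mu_n \;\geq\; \liminf_{n \to \infty} \iint h_m \, d\mu_n \;=\; \iint h_m \, d\mu_\ast. $$
Letting $m \to \infty$ and applying monotone convergence on the finite measure $\mu_\ast$ (legitimate since $h_m \uparrow h$ with all functions uniformly bounded) gives $\iint h_m \, d\mu_\ast \to \iint h \, d\mu_\ast$, and hence the desired inequality $\liminf_{n} \iint h \, d\mu_n \geq \iint h \, d\mu_\ast$.
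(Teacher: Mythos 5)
Your argument is correct, but the approximation scheme differs from the one in the paper. You approximate $h$ from below by the Lipschitz inf-convolutions $h_m(x,y)=\inf_{(u,v)}\bigl(h(u,v)+m\,\|(x,y)-(u,v)\|\bigr)$, verify $h_m\uparrow h$ pointwise using lower semicontinuity, test weak convergence directly against $h_m\in C_b(\mathbb{R}^2)$, and finish with monotone convergence on the finite measure $\mu_\ast$. The paper instead approximates $h$ (after the same reduction to $h\ge 0$) by the simple functions $g_m^\ast=\frac{1}{m}\sum_{k\ge 1} 1_{h^{-1}((k/m,\infty))}$, whose superlevel sets are open precisely because $h$ is lower semicontinuous, and invokes the Portmanteau theorem's liminf inequality for open sets; since $h-\frac{1}{m}\le g_m^\ast\le h$ the approximation error is uniform, so no monotone convergence step is needed there. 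Your route is more self-contained---it effectively reproves the relevant half of the Portmanteau theorem from the bare definition of weak convergence---at the cost of the (routine but necessary) verification that $h_m\uparrow h$ pointwise, which is indeed where lower semicontinuity enters; the paper's route is shorter but outsources the key step to the cited Portmanteau theorem. A small merit of your write-up is that you justify the reduction to $h\ge 0$ explicitly via convergence of the total masses $\mu_n(\mathbb{R}^2)\to\mu_\ast(\mathbb{R}^2)$, a point that matters because the $\mu_n$ are general finite measures rather than probability measures and which the paper leaves implicit.
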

\begin{proof}
We may assume without loss of generality that $ h $ is non-negative. We define the functions
$$  g^\ast_m(x,y):=\frac{1}{m}\sum_{k=1}^\infty 1_{h^{-1}((k/m,\infty))}(x,y).$$
Note that this is actually a finite sum of indicators of open sets. 
It then follows that $$ h-\frac{1}{m}\leq g_m^\ast\leq h .$$ 
Using the Portmanteau theorem (see, e.g., \cite[p.24]{BillingsleyConv-alt}) one can deduce from this that 
\begin{eqnarray*}
 \liminf_{n\rightarrow\infty}\iint h(x,y)\mu_n(dx,dy)
 &\geq& \liminf_{n\rightarrow\infty}\iint g^\ast_m(x,y)\mu_n(dx,dy)\\
  &\geq&\iint g^\ast_m(x,y)\mu_\ast(dx,dy)\\ 
   &\geq& \iint h(x,y)\mu_\ast(dx,dy)-1/m
\end{eqnarray*}
Since $ m\in\mathbb{N} $ is arbitrary, the result follows.
\end{proof}

\begin{lemma} \label{Straffheit}
The sequence of weighted measures $ ((x^2+y^2)\vee1)\mu_n  $ is tight if, and only if, the two sequences  
$ (x^2\vee1)(\mu_n\circ{\rm pr}_1^{-1})$ and $ (y^2\vee1)(\mu_n\circ{\rm pr}_2^{-1}) $ both are tight.
\end{lemma}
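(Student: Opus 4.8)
The plan is to reduce the statement to an elementary comparison of the weights involved. Write $\nu_n := ((x^2+y^2)\vee1)\mu_n$, and set $\lambda_n := (x^2\vee1)(\mu_n\circ{\rm pr}_1^{-1})$ and $\rho_n := (y^2\vee1)(\mu_n\circ{\rm pr}_2^{-1})$. The first step is to record the pointwise bounds
$$ (x^2\vee1)\leq (x^2+y^2)\vee1\leq (x^2\vee1)+(y^2\vee1)\qquad(x,y\in\mathbb R), $$
together with the statement symmetric in $y$, namely $(y^2\vee1)\leq(x^2+y^2)\vee1$. The left inequality is immediate from $x^2\leq x^2+y^2$, while the right one follows by distinguishing the cases $x^2+y^2\geq1$ and $x^2+y^2<1$. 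Integrating these against $\mu_n$ gives $\|\nu_n\|\leq\|\lambda_n\|+\|\rho_n\|\leq 2\|\nu_n\|$, so that the total masses on the two sides are comparable; in particular the uniform mass bound required for tightness holds on one side if and only if it holds on the other, and it only remains to transfer the uniform control of the mass escaping to infinity.

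For the direction $(\Rightarrow)$ I would argue directly. If $(\nu_n)$ is tight, then for $\varepsilon>0$ there is a compact $K\subset\mathbb R^2$, say $K\subset[-T,T]^2$, with $\sup_n\nu_n(\mathbb R^2\setminus K)<\varepsilon$. For $a>T$ the set $\{|x|>a\}$, viewed as a subset of $\mathbb R^2$, is contained in $\mathbb R^2\setminus K$, so the bound $(x^2\vee1)\leq(x^2+y^2)\vee1$ gives
$$ \lambda_n(\{|x|>a\})=\iint_{\{|x|>a\}}(x^2\vee1)\,d\mu_n\leq\nu_n(\{|x|>a\})\leq\nu_n(\mathbb R^2\setminus K)<\varepsilon $$
for all $n$, which is the tail condition for $(\lambda_n)$; the same argument handles $(\rho_n)$.

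The substantive direction is $(\Leftarrow)$. Assuming $(\lambda_n)$ and $(\rho_n)$ tight, I would fix $\varepsilon>0$, put $K_T:=[-T,T]^2$ with $T\geq1$, and split $\mathbb R^2\setminus K_T$ into the disjoint regions $R_1=\{|x|>T,\ |y|\leq T\}$, $R_2=\{|x|\leq T,\ |y|>T\}$ and $R_3=\{|x|>T,\ |y|>T\}$. On $R_3$ the right-hand weight bound at once yields $\nu_n(R_3)\leq\lambda_n(\{|x|>T\})+\rho_n(\{|y|>T\})$, and on $R_1$ the $x$-contribution is likewise bounded by $\lambda_n(\{|x|>T\})$. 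The main obstacle is the cross contribution $\iint_{R_1}(y^2\vee1)\,d\mu_n$: on $R_1$ the variable $y$ is only constrained by $|y|\leq T$, so a crude estimate carries a factor $T^2\mu_n(R_1)$ that need not vanish as $T\to\infty$. The resolution is a Chebyshev-type bound: for $T\geq1$ one has $x^2\vee1=x^2>T^2$ on $\{|x|>T\}$, whence
$$ \mu_n(\{|x|>T\})\leq\frac{1}{T^2}\iint_{\{|x|>T\}}(x^2\vee1)\,d\mu_n=\frac{1}{T^2}\lambda_n(\{|x|>T\}). $$
Since $R_1\subset\{|x|>T\}$ and $y^2\vee1\leq T^2$ on $R_1$, this gives $\iint_{R_1}(y^2\vee1)\,d\mu_n\leq T^2\mu_n(R_1)\leq\lambda_n(\{|x|>T\})$, so the dangerous $T^2$ factor cancels. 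Treating $R_2$ symmetrically, I arrive at $\nu_n(\mathbb R^2\setminus K_T)\leq 3\lambda_n(\{|x|>T\})+3\rho_n(\{|y|>T\})$ for every $n$; by the assumed tightness of $(\lambda_n)$ and $(\rho_n)$ both $\sup_n\lambda_n(\{|x|>T\})$ and $\sup_n\rho_n(\{|y|>T\})$ tend to $0$ as $T\to\infty$, so choosing $T$ large makes the right-hand side smaller than $\varepsilon$. Together with the mass bound from the first step, this is exactly the tightness of $(\nu_n)$.
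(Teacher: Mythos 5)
Your proof is correct and follows essentially the same route as the paper's: both decompose the complement of a large square $[-T,T]^2$ into the three regions according to which coordinate is large, bound $(x^2+y^2)\vee 1$ by $(x^2\vee 1)+(y^2\vee 1)$, and cancel the problematic $T^2$ factor on the mixed regions via the Chebyshev-type estimate $T^2\mu_n(\{|x|>T\})\le \lambda_n(\{|x|>T\})$. Your explicit remark on the comparability of total masses is a small additional point of care not spelled out in the paper, but the substance of the argument is identical.
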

\begin{proof}
It is obvious that tightness of $ (x^2+y^2)\vee1\ \mu_n $ implies tightness of 
$ (x^2\vee1) (\mu_n\circ{\rm pr}_1^{-1})$ and $ (y^2\vee1)(\mu_n\circ{\rm pr}_2^{-1})$.\\
In order to prove the converse we note that for a prescribed $ \epsilon>0 $ there exists
$ z_0>1 $ with the properties
$$     \int_{[-z_0,z_0]^c}x^2\vee1\ \mu_n\circ{\rm pr}_1^{-1}(dx)<\epsilon \ \ \ {\rm and } \ \ \ 
     \int_{[-z_0,z_0]^c}y^2\vee1\ \mu_n\circ{\rm pr}_2^{-1}(dy)<\epsilon .$$
It follows from this that  
$$   z_0^2\ \mu_n\circ{\rm pr}_1^{-1}([-z_0,z_0]^c)<\epsilon \ \ \ {\rm and} \ \ \ 
     z_0^2\ \mu_n\circ{\rm pr}_2^{-1}([-z_0,z_0]^c)<\epsilon .$$
This implies that
\begin{eqnarray*}
 &&  \iint_{([-z_0,z_0]\times[-z_0,z_0])^c}(x^2+y^2)\vee1\ \mu_n(dx,dy) \\
  &=& \iint_{[-z_0,z_0]\times[-z_0,z_0]^c}x^2\vee1\ \mu_n(dx,dy)
      +\iint_{[-z_0,z_0]\times[-z_0,z_0]^c}y^2\vee1\ \mu_n(dx,dy) \\
  &&+  \iint_{[-z_0,z_0]^c\times[-z_0,z_0]}x^2\vee1\ \mu_n(dx,dy)
      +\iint_{[-z_0,z_0]^c\times[-z_0,z_0]}y^2\vee1\ \mu_n(dx,dy) \\
  &&+ \iint_{[-z_0,z_0]^c\times[-z_0,z_0]^c}x^2\vee1\ \mu_n(dx,dy)
      +\iint_{[-z_0,z_0]^c\times[-z_0,z_0]^c}y^2\vee1\ \mu_n(dx,dy) 
\end{eqnarray*}
\begin{eqnarray*}
  &\leq& z_0^2\mu_n(\mathbb{R}\times[-z_0,z_0]^c)
      +\int_{[-z_0,z_0]^c}y^2\vee1\ \mu_n\circ{\rm pr}_2^{-1}(dy) \\
  &&+  \int_{[-z_0,z_0]^c}x^2\vee1\ \mu_n\circ{\rm pr}_1^{-1}(dx)
      +z_0^2\mu_n([-z_0,z_0]^c\times\mathbb{R}) \\
  &&+ \int_{[-z_0,z_0]^c}x^2\vee1\ \mu_n\circ{\rm pr}_1^{-1}(dx)
      +\int_{[-z_0,z_0]^c}y^2\vee1\ \mu_n\circ{\rm pr}_2^{-1}(dy) \\
  &\leq & 6\epsilon .
\end{eqnarray*}
The tightness of the weighted sequence $ ((x^2+y^2)\vee1) \mu_n$ follows from these considerations. 
\end{proof}
The next lemma will be useful to prove the consistency of the primal problem (\ref{primal}). It concerns the following transportation problems:  
\begin{eqnarray} \label{kleintransport}
    {\rm minimize}: && \int xy\ \mu(dx,dy) \\
 \nonumber  {\rm subject \ to}: && \mu\circ{\rm pr}_1^{-1}=P,\ \mu\circ{\rm pr}_2^{-1}=Q  \\
 \nonumber && \mu\in{\mathcal K}
\end{eqnarray}
and
\begin{eqnarray} \label{grosstransport}
   {\rm maximize}: && \int xy\ \mu(dx,dy) \\
 \nonumber  {\rm subject \ to}: && \mu\circ{\rm pr}_1^{-1}=P,\ \mu\circ{\rm pr}_2^{-1}=Q  \\
 \nonumber && \mu\in{\mathcal K}.
\end{eqnarray}

\begin{lemma}\label{Raender}
There exists an optimal solution $ \mu^- $ for the optimization problem (\ref{kleintransport}) and 
an optimal solution $ \mu^+ $ for the optimization problem (\ref{grosstransport}).
\end{lemma}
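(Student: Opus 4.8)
The plan is to establish existence of optimizers for both \eqref{kleintransport} and \eqref{grosstransport} by a direct compactness argument carried out in the $\sigma(\mathbb{X}, \mathbb{Y})$-topology; I would treat the minimization problem \eqref{kleintransport} in detail, the maximization problem \eqref{grosstransport} being entirely analogous. First I would note that both problems are consistent, since the product measure $P \otimes Q$ lies in $\mathcal{M}(P, Q) \cap \mathcal{K}$ (its second moment integral is $\int x^2 \, P(dx) + \int y^2 \, Q(dy) < \infty$), and that the objective is finite on the feasible set, because $|\iint xy \, \mu(dx, dy)| \leq \tfrac{1}{2}(\int x^2 \, P(dx) + \int y^2 \, Q(dy))$ for every feasible $\mu$. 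Hence the infimum $m$ in \eqref{kleintransport} is a finite real number, and one may fix a minimizing sequence $(\mu_n)$ in $\mathcal{M}(P, Q) \cap \mathcal{K}$ with $\iint xy \, \mu_n(dx, dy) \to m$.

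The crucial step is to extract a convergent subsequence. Since every $\mu_n$ has marginals $P$ and $Q$, the weighted marginal measures $(x^2 \vee 1)(\mu_n \circ \mathrm{pr}_1^{-1}) = (x^2 \vee 1)P$ and $(y^2 \vee 1)(\mu_n \circ \mathrm{pr}_2^{-1}) = (y^2 \vee 1)Q$ are constant in $n$, hence trivially tight. By Lemma \ref{Straffheit} the weighted joint measures $\nu_n := ((x^2 + y^2) \vee 1)\mu_n$ are therefore tight; moreover their total masses are uniformly bounded, since $\iint ((x^2 + y^2) \vee 1)\mu_n \leq \int x^2 \, P(dx) + \int y^2 \, Q(dy) + 1$. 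By Prokhorov's theorem a subsequence $(\nu_{n_j})$ converges weakly to some finite measure $\nu_*$. Writing $\mu_* := \nu_* / ((x^2 + y^2) \vee 1)$, one checks that $\mu_* \in \mathcal{K}$, and Lemma \ref{sigmakonv}(i) then yields that $\mu_{n_j} \to \mu_*$ in the $\sigma(\mathbb{X}, \mathbb{Y})$-topology.

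It remains to verify feasibility and optimality of $\mu_*$. For feasibility I would invoke Lemma \ref{sigmakonv}(ii), so that $\mu_{n_j} \to \mu_*$ weakly in the usual sense; testing against functions of the form $(x, y) \mapsto \psi(x)$ and $(x, y) \mapsto \psi(y)$ with $\psi \in C_b(\mathbb{R})$, and using that the marginals of the $\mu_{n_j}$ are identically $P$ and $Q$, forces $\mu_* \circ \mathrm{pr}_1^{-1} = P$ and $\mu_* \circ \mathrm{pr}_2^{-1} = Q$, i.e.\ $\mu_* \in \mathcal{M}(P, Q)$. For optimality I exploit the decisive structural feature of the setup: the cost $(x, y) \mapsto xy$ belongs to $\mathbb{Y}$, so by the very definition of the $\sigma(\mathbb{X}, \mathbb{Y})$-topology the functional $\mu \mapsto \iint xy \, \mu(dx, dy)$ is continuous along $\sigma(\mathbb{X}, \mathbb{Y})$-convergent sequences. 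Consequently $\iint xy \, \mu_*(dx, dy) = \lim_j \iint xy \, \mu_{n_j}(dx, dy) = m$, so that $\mu_* =: \mu^-$ solves \eqref{kleintransport}.

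The main obstacle is precisely the unboundedness of the cost $xy$, which rules out a naive appeal to weak compactness of probability measures together with Lemma \ref{Portmanteau} (the latter applies only to bounded lower semicontinuous integrands). This is exactly what the auxiliary spaces $\mathbb{X}$, $\mathbb{Y}$ and the weighting by $(x^2 + y^2) \vee 1$ are designed to circumvent: tightness of the weighted measures supplies the compactness, while the membership $xy \in \mathbb{Y}$ upgrades lower semicontinuity of the objective to outright continuity, so that both the minimizer $\mu^-$ and, by the symmetric argument applied to a maximizing sequence, the maximizer $\mu^+$ are obtained without any semicontinuity gap.
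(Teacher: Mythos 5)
Your proof is correct, but it takes a genuinely different route from the paper's. You run the direct method of the calculus of variations: pick a minimizing sequence, observe that its weighted marginals are \emph{constant} (hence trivially tight), invoke Lemma \ref{Straffheit} and Prokhorov to get a weakly convergent subsequence of the weighted joint measures, upgrade to $\sigma(\mathbb{X},\mathbb{Y})$-convergence via Lemma \ref{sigmakonv}(i), and conclude by continuity of $\mu\mapsto\iint xy\,\mu(dx,dy)$ (since $xy\in\mathbb{Y}$). The paper instead stays inside the linear-programming framework: after establishing consistency and finiteness exactly as you do, it reduces existence to the closedness in $\sigma(\mathbb{V},\mathbb{W})$ of the image set $D=\{(\mu\circ{\rm pr}_1^{-1},\mu\circ{\rm pr}_2^{-1},\iint xy\,\mu):\mu\in\mathcal{K}\}$ and then cites the abstract existence theorem of Anderson and Nash (the analogue of Proposition \ref{kp-stetig}(b)). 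The closedness proof uses the same compactness machinery as yours, but must handle an \emph{arbitrary} sequence in $\mathcal{K}$ whose constraint images converge, so tightness of the weighted marginals has to be extracted from $\sigma(\mathbb{V},\mathbb{W})$-convergence rather than read off from the fixed marginals $P$ and $Q$. What each approach buys: yours is more elementary and self-contained, and is slightly simpler precisely because minimizing sequences are automatically feasible; the paper's pays the overhead of verifying closedness of $D$ for general $\mu_n\in\mathcal{K}$, but in exchange reuses verbatim the argument needed later for the closedness of the set $H$ in the proof of Theorem \ref{hauptsatz}, so nothing is proved twice. One small point worth keeping explicit in your write-up, which you did address: Prokhorov for the finite (non-probability) measures $\nu_n$ requires the uniform mass bound in addition to tightness, and one must check that the weak limit $\nu_\ast$ really has the form $((x^2+y^2)\vee 1)\mu_\ast$ with $\mu_\ast\in\mathcal{K}$; both steps are correct as you state them.
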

\begin{proof} 
It suffices to indicate the proof for (\ref{kleintransport}).
We first note that the product measure $ \mu_0:=P\otimes Q $ satisfies $ \mu_0\circ{\rm pr}_1^{-1}=P $ and 
$ \mu_0\circ{\rm pr}_2^{-1}=Q $. 
Thus the transportation problem (\ref{kleintransport}) is consistent. 
Moreover, we have that 
\begin{eqnarray*}
  \Bigg|\iint xy\ \mu(dx,dy)\Bigg| &\leq& \iint (x^2+y^2)\vee1\ \mu(dx,dy) \\
   &\leq& \int x^2\vee1 \ \mu\circ{\rm pr}_1^{-1}(dx)+ \int y^2\vee1 \ \mu\circ{\rm pr}_2^{-1}(dy)\\
   &\leq& \int x^2\vee1 \ P(dx)+ \int y^2\vee1 \ Q(dy) <\infty .
\end{eqnarray*}
Thus the transportation problem (\ref{kleintransport}) has finite optimal value. To prove the existence of 
an optimal solution, one applies an analogue to Proposition \ref{kp-stetig} (b) (see \cite[Thm. 3.22]{AndersonNashBuch}),
which states that it is sufficient to prove that the set
$$ D := \left\{ \left(\mu \circ {\rm pr}_1^{-1},\ \mu \circ {\rm pr}_2^{-1}, \iint xy\ \mu(dx, dy)\right):\ \mu \in 
\mathcal{K}\right\}$$ is closed in $\sigma(\mathbb{V}, \mathbb{W}).$
Assume that there exists a sequence $ \mu_n $ in $ \mathcal{K} $ with the property that  
$$   \left( \mu_n\circ{\rm pr}_1^{-1},\ \mu_n\circ{\rm pr}_2^{-1}, \iint xy\mu_n(dx,dy) \right) $$ 
converges with respect to $\sigma(\mathbb{V}, \mathbb{W})$ towards $(\tilde{\mu}_1, \tilde{\mu}_2, c)$.  
The convergence of these sequences in the $ \sigma(\mathbb{V},\mathbb{W}) $-topology implies that 
the sequences $ (x^2\vee 1)(\mu_n\circ{\rm pr}_1^{-1}) $ and $ (y^2\vee1)( \mu_n\circ{\rm pr}_2^{-1}) $
are tight. By Lemmata \ref{Straffheit} and \ref{sigmakonv} it follows that the sequences 
$ \nu_n := ((x^2+y^2)\vee1) \mu_n $ and $\mu_n$ are tight. Hence there exist subsequences $\mu_{n_k}$ and
$\nu_{{n_k}}$ converging to $\mu_\ast,\  \nu_\ast = ((x^2 + y^2) \vee 1) \mu_\ast$, respectively.
This means that $\mu_{n_k}$ $\sigma(\mathbb{X}, \mathbb{Y})$-converges to $\mu_\ast$. 
It follows from this that the three sequences 
$$ \mu_n\circ{\rm pr}_1^{-1}, \ \ \mu_n\circ{\rm pr}_2^{-1} 
     \ \ {\rm and} \ \   \iint xy\ \mu_n(dx,dy)   $$
converge respectively towards 
$$ \mu_\ast\circ{\rm pr}_1^{-1}, \ \ \mu_\ast\circ{\rm pr}_2^{-1} 
     \ \ {\rm and} \ \   \iint xy\ \mu_\ast(dx,dy)   ,$$
hence     
$$ \tilde{\mu}_1 = \mu_\ast\circ{\rm pr}_1^{-1}, \ \ \tilde{\mu}_2 = \mu_\ast\circ{\rm pr}_2^{-1} 
     \ \ {\rm and} \ \   c = \iint xy\ \mu_\ast(dx,dy)   .$$     
Thus the set $ D $ is closed and the proof of \eqref{kleintransport} is complete. 
\end{proof}

\begin{proof}[Proof of Theorem \ref{hauptsatz}]

We first prove that the primal problem (\ref{primal}) is consistent if $ k\in\mathbb{R} $ satisfies 
$$ \inf_{\mu\in{\mathcal M}(P,Q)}\iint xy\ \mu(dx,dy)\leq k\leq\sup_{\mu\in{\mathcal M}(P,Q)}\iint xy\ 
\mu(dx,dy) .$$
In Lemma \ref{Raender} we saw that the supremum and infimum are attained in $ \mu^+ $ resp. $ \mu^- $.
So there exists  
$ \lambda\in[0,1] $ with the property that $ \mu_\lambda:=\lambda\mu^++(1-\lambda)\mu^- \in 
\mathcal{M}^1(\mathbb{R}^2)$  
satisfies
$$  \mu_\lambda\circ{\rm pr}_1^{-1}=P,\ \ \mu_\lambda\circ{\rm pr}_2^{-1}=Q \ \  
               {\rm and} \ \  \iint xy\ \mu_\lambda(dx,dy)=k .$$
This proves that the set of feasible solutions for the primal problem (\ref{primal}) is nonempty, 
hence (\ref{primal}) is 
consistent. Moreover, since $ \ell $ is bounded, the objective functional
$$   \mu\mapsto\iint\ell(x,y)\mu(dx,dy)  $$ 
is bounded on the set of feasible solutions. This implies that the 
optimal value must be finite.\\

From Proposition \ref{kp-stetig} it follows that in order to prove 
Theorem \ref{hauptsatz}, it is sufficient to show that the set
\begin{equation}
\label{finalH}
 H:=\Big\{(A\mu,\langle\ell,\mu\rangle+r):\ \mu\in{\mathcal K},\ r\geq0\Big\}   
\end{equation}
is closed in $ \mathbb{V}\times\mathbb{R} $.\\

Consider a sequence of measures $ \mu_n\in{\mathcal K} $ and a sequence of 
real numbers $ r_n \ge 0$ such that the triplets
$$ \left( \mu_n\circ{\rm pr}_1^{-1}, \ \ \mu_n\circ{\rm pr}_2^{-1}, \ \ \iint xy\ \mu_n(dx,dy)\right) $$
converge in  $ \sigma(\mathbb{V},\mathbb{W}) $ towards a triplet $ (\tilde{\mu}_1, \ \tilde{\mu}_2,\ c)\in\mathbb{V} $, and
that 
$$           \iint\ell(x,y)\mu_n(dx,dy)+r_n $$
converges to a real number $ b $. 
We have to find a measure $ \mu_\ast\in{\mathcal K} $ and a real number $ r_\ast\geq0 $ with the 
properties
$$ \mu_\ast\circ{\rm pr}_1^{-1}=\ \tilde{\mu}_1, \ \ \mu_\ast\circ{\rm pr}_2^{-1}=\tilde{\mu}_2, \ \ \iint xy\mu_\ast(dx,dy)=c 
     \ \ {\rm and} \ \     \iint\ell(x,y)\mu_\ast(dx,dy)+r_\ast=b .$$
The convergence of these sequences in the $ \sigma(\mathbb{V},\mathbb{W}) $-topology implies that 
the sequences $ (x^2\vee 1)(\mu_n\circ{\rm pr}_1^{-1}) $ and $ (y^2\vee1)( \mu_n\circ{\rm pr}_2^{-1}) $
are tight. By Lemmata \ref{Straffheit} and \ref{sigmakonv} it follows that the sequences 
$ \nu_n := ((x^2+y^2)\vee1) \mu_n $ and $\mu_n$ are tight. Hence there exist subsequences $\mu_{n_k}$ and
$\nu_{{n_k}}$ converging to $\mu_\ast,\  \nu_\ast = ((x^2 + y^2) \vee 1) \mu_\ast$, respectively.
This means that $\mu_{n_k}$ $\sigma(\mathbb{X}, \mathbb{Y})$-converges to $\mu_\ast$. 
It follows from this that the three sequences 
$$ \mu_n\circ{\rm pr}_1^{-1}, \ \ \mu_n\circ{\rm pr}_2^{-1} 
     \ \ {\rm and} \ \   \iint xy\ \mu_n(dx,dy)   $$
converge respectively towards 
$$ \mu_\ast\circ{\rm pr}_1^{-1}, \ \ \mu_\ast\circ{\rm pr}_2^{-1} 
     \ \ {\rm and} \ \   \iint xy\ \mu_\ast(dx,dy)   ,$$
hence     
$$ \tilde{\mu}_1 = \mu_\ast\circ{\rm pr}_1^{-1}, \ \ \tilde{\mu}_2 = \mu_\ast\circ{\rm pr}_2^{-1} 
     \ \ {\rm and} \ \   c = \iint xy\ \mu_\ast(dx,dy)   .$$

Moreover, since $ \ell $ is bounded and lower semicontinuous,  by Lemma \ref{Portmanteau}  one has 
$$   \liminf_{k\rightarrow\infty}\iint\ell(x,y)\mu_{n_k}(dx,dy)\geq\iint\ell(x,y)\mu_\ast(dx,dy) > - \infty .$$
By further thinning out the subsequence $ \mu_{n_k} $, we may assume without loss of generality that 
$$   \lim_{k\rightarrow\infty}\iint\ell(x,y)\mu_{n_k}(dx,dy)\geq\iint\ell(x,y)\mu_\ast(dx,dy) > - \infty . $$
Together with the convergence of the sequence 
$$  \iint\ell(x,y)\mu_n(dx,dy)+r_n,$$ 
this implies that the sequence $ r_{n_k} $ converges to a non-negative real number $ r_\infty $. 
If we define
$$ r_\ast:= r_\infty+\lim_{k\rightarrow\infty}\iint\ell(x,y)\mu_{n_k}(dx,dy)-\iint\ell(x,y)\mu_\ast(dx,dy),  $$
it follows that
\begin{eqnarray*}
        \lim_{n\rightarrow\infty}\Bigg(\iint\ell(x,y)\mu_n(dx,dy)+r_n\Bigg) 
  &=&    \lim_{k \rightarrow\infty}\Bigg(\iint\ell(x,y)\mu_{n_k}(dx,dy)+r_{n_k}\Bigg) \\
  &=&    \iint\ell(x,y)\mu_\ast(dx,dy)+r_\ast . 
\end{eqnarray*}
Since by definition $ r_\ast $ is non-negative, it follows that the set $ H $ is closed.
This completes the proof of Theorem \ref{hauptsatz}.
\end{proof}

\bibliographystyle{amsplain}
\bibliography{BriceMichael}

\end{document}